\newtheorem{theorem}{Theorem}[section]
\newtheorem{lemma}[theorem]{Lemma}
\newtheorem{corollary}[theorem]{Corollary}
\newtheorem{conjecture}[theorem]{Conjecture}
\newtheorem{definition}[theorem]{Definition}
\def\BT{\begin{theorem}}
\def\ET{\end{theorem}}
\def\BL{\begin{lemma}}
\def\EL{\end{lemma}}
\def\BC{\begin{corollary}}
\def\EC{\end{corollary}}
\def\BE{\begin{example}}
\def\EE{\end{example}}
\def\BD{\begin{definition}}
\def\ED{\end{definition}}
\def\BR{\begin{remark}}
\def\ER{\end{remark}}
\def\BAS{\begin{assumption}}
\def\EAS{\end{assumption}}
\def\BCJ{\begin{conjecture}}
\def\ECJ{\end{conjecture}}
\def\BPF{\begin{proof}}
\def\EPF{\end{proof}}
\def\beq{\begin{equation}}
\def\eeq{\end{equation}}
\def\R{{\bf R \,}}
\def\rfont{\tt}
\def\pfont{\sc}
\def\LOCAL{$\mathcal{LOCAL}$\xspace}
\def\GOSSIP{$\mathcal{GOSSIP}$\xspace}
\def\CONGEST{$\mathcal{CONGEST}$\xspace}
\def\Rumor{{\pfont Rumor}\xspace}
\def\NeighborExchange{{\pfont NeighborExchange}\xspace}
\newcommand{\eps}{\epsilon}
\newcommand{\polylog}{\mathrm{polylog}}
\newcommand{\argmin}{\mathrm{argmin}}
\DeclareMathOperator{\vol}{vol}
\begin{document}

\setcounter{page}{1}

\title{Global Computation in a Poorly Connected World:\\ Fast Rumor Spreading with
No Dependence on Conductance}

%
\thanks{This work was partially supported by the Simons Postdoctoral Fellows Program and NSF grant CCF-0843915.}
\maketitle
\begin{center}
\parbox{1.4in}{\center Keren Censor-Hillel\\MIT}
\parbox{1.4in}{\center Bernhard Haeupler\\MIT}
\parbox{1.4in}{\center Jonathan A.\ Kelner\\MIT}
\parbox{1.4in}{\center Petar Maymounkov\\MIT}
\end{center}
{\ }\\{\ }\\
\date{}

\thispagestyle{empty}

\begin{abstract}
In this paper, we study the question of how efficiently a collection of interconnected nodes can perform a global computation in the widely studied \GOSSIP model of communication.  In this model, nodes do not know the global topology of the network, and they may only initiate contact with a single neighbor in each round.  This model contrasts with the much less restrictive \LOCAL model, where a node may simultaneously communicate with all of its neighbors in a single round.  A basic question in this setting is how many rounds of communication are required for the information dissemination problem, in which each node has some piece of information and is required to collect all others.

In the \LOCAL model, this is quite simple: each node broadcasts all of its information in each
round, and the number of rounds required will be equal to the diameter of the underlying communication graph.
In the \GOSSIP model, each node must independently choose a single neighbor to contact, and the lack of global information makes it difficult to make any sort of principled choice.  As such, researchers have focused on the \emph{uniform gossip algorithm}, in which each node independently selects a neighbor uniformly at random.  When the graph is well-connected, this works quite well.  In a string of beautiful papers, researchers proved a sequence of successively stronger bounds on the number of rounds required in terms of the conductance $\phi$, culminating in a bound of $O(\phi^{-1} \log n)$.

In this paper, we show that a fairly simple modification of the protocol gives an algorithm that solves the information dissemination problem in at most $O(D+\text{polylog}{(n)})$ rounds in a network of diameter $D$, with \emph{no dependence on the conductance}. This is at most an additive polylogarithmic factor from the trivial lower bound of $D$, which applies even in the \LOCAL model.

In fact, we prove that something stronger is true: \emph{any} algorithm that requires $T$ rounds in the \LOCAL model can be simulated in $O(T +\mathrm{polylog}(n))$ rounds in the \GOSSIP model.
We thus prove that these two models of distributed computation are essentially equivalent.

\end{abstract}

\newpage
\setcounter{page}{1}
\section{Introduction}

Many distributed applications require nodes of a network to perform a global task using only local knowledge. Typically a node initially only knows the identity of its neighbors and gets to know a wider local neighborhood in the underlying communication graph by repeatedly communicating with its neighbors. Among the most important questions in distributed computing is how certain global computation problems, e.g., computing a maximal independent set~\cite{luby} or a graph coloring~\cite{BarenboimE2009}, can be performed with such local constraints.
\begin{sloppypar}Many upper and lower bounds for distributed tasks are given for the well-known \LOCAL model~\cite[Chapter 2]{Peleg2000}, which operates in synchronized rounds and allows each node in each round to exchange messages of unbounded size with all of its neighbors. It is fair to say that the \LOCAL model is essentially the established minimal requirement for a distributed algorithm. Indeed, whenever a distributed algorithm is said to have running time $T$ it is implied that, at the least, there exists a $T$-round algorithm in the \LOCAL model.
\end{sloppypar}

In many settings, practical system design or physical constraints do not allow a node to contact all of its (potentially very large number of) neighbors at once. In this paper we focus on this case and consider the \GOSSIP model, which restricts each node to initiate at most one (bidirectional) communication with one of its neighbors per round. In contrast to computations in the \LOCAL model, algorithms for the \GOSSIP model have to decide which neighbor to contact in each round. This is particularly challenging when the network topology is unknown.

Algorithms with such \emph{gossip constraints} have been intensively
studied for the so-called \Rumor problem (also known as the \emph{rumor spreading} or \emph{information dissemination} problem), in which each node has some initial input and is required to collect the information of all other nodes. Most previous papers analyzed the simple {\rfont UniformGossip} algorithm, which chooses a random neighbor to contact in each round. The uniform gossip mixes well on well-connected graphs, 
 and good bounds for its convergence in terms of the graph conductance have been given~\cite{ChierichettiLP2010,MoskAoyamaS2006a,Giakkoupis2011}. However, it has a tendency to repeatedly communicate between well-connected neighbors while not transmitting information across bottlenecks.
  Only recently have algorithms been designed that try to avoid this behavior.
By alternating between random and deterministic choices,~\cite{CensorHillelShachnai2011} showed that fast convergence can be achieved for a wider family of graphs, namely, those which have large \emph{weak conductance} (a notion defined therein).  However, while this outperformed existing techniques in many cases, its running time bound still depended  on a notion of the connectivity of the graph.

\subsection{Our results}
This paper significantly improves upon previous algorithms by  providing
 the first information spreading algorithm for the \GOSSIP model
that is fast for \emph{all} graphs, with no dependence on their conductance.
Our algorithm requires at most $O(D+\text{polylog}{(n)})$ rounds in a network of size $n$ and diameter $D$. This is at most an additive polylogarithmic factor from the trivial lower bound of $O(D)$ rounds even for the \LOCAL model. In contrast, there are many graphs with polylogarithmic diameter on which all prior algorithms have $\Omega(n)$ bounds.

In addition, our results apply more generally to any algorithm in the \LOCAL
model. We show how any algorithm that takes $T$ time in the \LOCAL
model can be simulated in the \GOSSIP model in $O(T + \text{polylog}{(n)})$ time, thus incurring only an additional 
polylogarithmic cost in the size of the network $n$. Our main result
that leads to this simulation is an algorithm for the \GOSSIP model in
which each node exchanges information (perhaps indirectly) with each
of its neighbors within a polylogarithmic number of rounds. This holds
for every graph, despite the possibility of large degrees. A key
ingredient in this algorithm is a recursive decomposition of graphs into clusters of sufficiently large conductance, allowing 
fast (possibly indirect) exchange of information between nodes inside clusters. The
decomposition guarantees that the number of edges between pairs of
nodes that did not exchange information decreases by a constant
fraction.
To convert the multiplicative polylogarithmic overhead for each
simulated round into the additive overhead in our final simulation
result we show connections between sparse graph spanners and
algorithms in the \GOSSIP model. This allows us to simulate known
constructions of nearly-additive sparse spanners~\cite{Pettie2009}, which then in turn can be used in our simulations for even more efficient communication.

\subsection{Our Techniques}
The key step in our approach is to devise a distributed subroutine in the
\GOSSIP model to efficiently simulate one round of the \LOCAL model by a small number of \GOSSIP rounds.
In particular, the goal is to deliver each node's current messages to all of
its neighbors, which we refer to as the \NeighborExchange problem.  
Indeed, we exhibit such an algorithm, called {\rfont Superstep}, which
requires at most $O(\log^3 n)$ rounds in the \GOSSIP model for {\it all} graphs:

\BT\label{thm:super-sketch}
\begin{sloppypar}
The {\rfont Superstep} algorithm solves \NeighborExchange in the \GOSSIP
model in $O(\log^3 n)$ rounds.
\end{sloppypar}
\ET

Our design for the {\rfont Superstep} algorithm was inspired by ideas from~\cite{CensorHillelShachnai2011}
and started with an attempt to analyze the following very natural algorithm for the \NeighborExchange problem:
In each round each node contacts a random neighbor whose message is not yet known to it. While this algorithm 
works well on most graphs, there exist graphs on
which it requires a long time to complete
due to asymmetric propagation of messages. We give an explicit example and discuss this issue in Section~\ref{sec:discussion}.

The {\rfont Superstep} algorithm is simple and operates by repeatedly performing $\log^3{n}$
rounds of the {\rfont UniformGossip} algorithm, where each node
chooses a random neighbor to contact at each round, followed by a reversal of
the message exchanges to maintain symmetry. From~\cite{ChierichettiLP2010} or its
strengthening~\cite{GiakkoupisW2011}, it is known that all pairs of vertices
(and in particular all pairs of neighbors) that lie inside a high-conductance subset
of the underlying graph exchange each other's messages within a single
iteration. An existential graph
decomposition result, given in Corollary~\ref{cor:cluster}, shows that for any
graph there is a decomposition into high-conductance clusters with at least
a constant fraction of intra-cluster edges. This implies that the number of
remaining message exchanges required decreases by a constant factor in each iteration, which
results in a logarithmic number of iterations until \NeighborExchange
is solved.

This gives a simple algorithm for solving the \Rumor problem, which
requires all nodes to receive the messages of all other nodes: By iterating {\rfont Superstep} $D$
times, where $D$ is the diameter of the network, one obtains an $O(D \cdot \log^3 n)$ round algorithm.
This is at most an $O(\log^3 n)$-factor slower than the trivial diameter lower bound and is a drastic
improvement compared to prior upper bounds~\cite{CensorHillelShachnai2011,MoskAoyamaS2006a,ChierichettiLP2010,Giakkoupis2011},
which can be of order $O(n)$ even for networks with constant or logarithmic $D$.

Beyond the \Rumor problem, it is immediate that the \NeighborExchange problem bridges the gap between the
\LOCAL and \GOSSIP models in general. Indeed, we can simply translate a single round of a \LOCAL algorithm
into the \GOSSIP model by first using any algorithm for
\NeighborExchange to achieve the local broadcast
and then performing the same local computations. We call this a simulation and
more generally define an {\it $(\alpha(G),\beta(G))$-simulator} as a transformation
that takes any algorithm in the \LOCAL model that runs in $T(G)$ rounds if the underlying topology is
$G$, and outputs an equivalent algorithm in the \GOSSIP model that runs in
$O_n(\alpha(G))\cdot T(G) + O_n(\beta)$ rounds. Thus, the simulation based on the {\rfont Superstep}
algorithm gives a $(\log^3 n,0)$-simulator.

In many natural graph classes, like graphs with bounded genus or excluded minors, one can do better.
Indeed we give a simple argument that on any (sparse) graph with {\it hereditary density} $\delta$
there is a schedule of direct message exchanges such that \NeighborExchange is achieved in $2 \delta$
rounds. Furthermore an order-optimal schedule can be computed in $\delta \log n$ rounds
of the \GOSSIP model even if $\delta$ is not known. This leads to a $(\delta,\delta \log n)$-simulator.

Another way to look at this is that communicating over any hereditary sparse graph remains fast
in the \GOSSIP model. Thus, for a general graph, if one knows a sparse subgraph that has short paths from any node to its
neighbors, one can solve the \NeighborExchange problem by communicating via these paths. Such graphs
have been intensely studied and are known as spanners. We show interesting connections between
simulators and spanners. For one, any fast algorithm for the \NeighborExchange problem induces a
sparse low-stretch spanner. The {\rfont Superstep} algorithm can thus be seen as a new
spanner construction in the \GOSSIP model with the interesting property that the total
number of messages used is at most $O(n \log^3 n)$. To our knowledge this is the first such construction.
This also implies that, in general, \NeighborExchange requires a
logarithmic number of rounds (up to $\log\log n$ factors perhaps) in the \GOSSIP model. Considering in the other direction, we show
that any fast spanner construction in the \LOCAL model can be used to further decrease
the multiplicative overhead of our $(\log^3 n,0)$-simulator. Applying this insight
to several known spanner
constructions~\cite{DGPV2008,pettie2010distributed,DMPRS2005,Pettie2009}
leads to our second main theorem:

\BT\label{thm:allsimulators}
Every algorithm in the \LOCAL model which completes in $T = T(G)$ rounds
when run on the topology $G$ can be simulated in the \GOSSIP model in
$$
O(1)\cdot\min\Big\{
  T \ \cdot \ \log^3n,\\
	T \ \cdot \ 2^{\log^* n} \log n  + \ \log^4n,\\
	T \ \cdot \ \log n                + \ 2^{\log^* n} \log^4 n,\\
	T                                 + \ \log^{O(1)}n,\\
	T \ \cdot \ \delta + \delta \log n,\\
	T \ \cdot \ \Delta  \Big\}
$$
rounds, where $n$ is the number of nodes, $\Delta$ the maximum degree and $\delta$ the hereditary density
of $G$.
\ET

When we apply this result to the greedy algorithm for the \Rumor problem,
where $T=D$, we obtain an algorithm whose $O(D + \polylog n)$ rounds are optimal up to the additive
polylogarithmic term, essentially closing the gap to the known trivial lower bound of $\Omega(D)$.

\subsection{Related Work}

The problem of spreading information in a distributed system was introduced by Demers et al.~\cite{Demers87} for the purpose of replicated database maintenance, and it has been extensively studied thereafter.   

One fundamental property of the distributed system that affects the number of rounds required for information spreading is the communication model.
The \emph{random phone call} model was introduced by Karp et al.~\cite{KarpSSV2000}, allowing every node to contact one other node in each round. In our setting, this corresponds to the complete graph. This model alone received much attention, such as in bounding the number of calls~\cite{DoerrF2010}, bounding the number of random bits used~\cite{GiakkoupisW2011}, bounding the number of bits~\cite{FraigniaudG2010}, and more.

\begin{sloppypar} The number of rounds it takes to spread information for the randomized
algorithm {\rfont UniformGossip}, in which every node chooses its
communication partner for the next round uniformly at random from its
set of neighbors, was analyzed using the conductance of the underlying
graph by Mosk-Aoyama and Shah ~\cite{MoskAoyamaS2006a}, by
Chierichetti et al.~\cite{ChierichettiLP2010}, and later by
Giakkoupis~\cite{Giakkoupis2011}, whose work currently has the best
bound in terms of conductance, of $O(\frac{\log{n}}{\Phi(G)})$ rounds,
with high probability.
\end{sloppypar}

Apart from the uniform randomized algorithm, additional algorithms were suggested for spreading information. We shortly overview some of these approaches. Doerr et al.~\cite{DoerrFS08} introduce \emph{quasi-random} rumor spreading, in which a node chooses its next communication partner by deterministically going over its list of neighbors, but the starting point of the list is chosen at random. Results are $O(\log{n})$ rounds for a complete graph and the hypercube, as well as improved complexities for other families of graphs compared to the randomized rumor spreading algorithm with uniform distribution over neighbors. This was followed by further analysis of the quasi-random algorithm~\cite{DoerrFS09,FountoulakisH2009}. A hybrid algorithm, alternating between deterministic and randomized choices~\cite{CensorHillelShachnai2011}, was shown to achieve information spreading in $O(c(\frac{\log{n}}{\Phi_c(G)}+c))$ round, w.h.p., where $\Phi_c(G)$ is the \emph{weak conductance} of the graph, a measure of connectivity of subsets in the graph.
Distance-based bounds were given for nodes placed with uniform density in $R^d$~\cite{KempeKD2001,KempeK2002}, which also address gossip-based solutions to specific problems such as resource location and minimum spanning tree.

The \LOCAL model of communication, where each node communicates with each of its neighbors in every round, was formalized by Peleg~\cite{Peleg2000}. Information spreading in this model requires a number of rounds which is equal to the diameter of the communication graph. Many other distributed tasks have been studied in this model, and below we mention a few in order to give a sense of the variety of problems studied. These include computing maximal independent sets~\cite{BarenboimE2008}, graph colorings~\cite{BarenboimE2009}, computing capacitated dominating sets~\cite{KuhnM2010}, general covering and packing problems~\cite{KuhnMW2006}, and general techniques for distributed symmetry breaking~\cite{SchneiderW2010}.


\section{Preliminaries and Definitions}

\subsection{The {\rfont UniformGossip} Algorithm}

The {\rfont UniformGossip} algorithm is a common
algorithm for \Rumor.  (It is also known as the PUSH-PULL algorithm in
some papers, such as~\cite{Giakkoupis2011}.) Initially each
vertex $u$ has some message $M_u$. At each step, every
vertex chooses a random incident edge $(u,v)$ at which point
$u$ and $v$ exchange all messages currently known to them.
The process stops when all vertices know everyone's initial messages.
In order to treat this process formally, for any fixed vertex $v$ and its
message $M_v$,
we treat the set of vertices that know $M_v$ as a set that evolves
probabilistically over time, as we explain next.

We begin by fixing an ambient graph $G=(E,V)$, which is unweighted
and directed. The {\rfont UniformGossip} process is a Markov chain
over $2^V$, the set of vertex subsets of $G$. Given a current
state $S\subseteq V$, one transition is defined as follows.
Every vertex $u$ picks an incident outgoing edge $a_u=(u,w)\in E$
uniformly at random from all such candidates. Let us call the set of
all chosen edges $A=\{a_u:u\in V\}$ an \textit{activated set}.
Further let $A^\circ=\{(u,w):(u,w)\in A\text{ or }(w,u)\in A\}$ be
the symmetric closure of $A$.
The new state of the chain is given by $S\cup B$, where by definition
a vertex $v$ is in the \textit{boundary} set $B$ if and only if
there exists $u\in S$ such that $(u,v)\in A^\circ$.
Note that $V$ is the unique absorbing state, assuming a non-empty start.

We say that an edge $(u,w)$ is \textit{activated}
if $(u,w)\in A^\circ$.
If we let $S$ model the set of nodes in possession of the message $M_v$
of some fixed vertex $v$ and we assume bidirectional message exchange along activated
edges, the new state $S\cup B$ (of the Markov process) actually describes the
set of nodes in possession of the message $M_v$ after one distributed step of
the {\rfont UniformGossip} algorithm.

Consider a $\tau$-step Markov process $K$, whose activated sets at each step
are respectively $A_1,\dots,A_\tau$. Let the \textit{reverse} of $K$,
written $K^{\rm rev}$, be the $\tau$-step
process defined by the activated sets $A_\tau,\dots,A_1$, in this order.
For a process $K$, let $K(S)$ denote the end state when started from $S$.

Without loss of generality, for our analysis we will assume that
only a single ``starting'' vertex $s$ has an initial message $M_s$. We will
be interested in analyzing the number of rounds of {\rfont UniformGossip}
that ensure that all other vertices learn $M_s$, which we call the
{\it broadcast time}. Clearly, when more than one vertex has an initial
message, the broadcast time is the same since all messages are exchanged in
parallel.

\BL[Reversal Lemma]
If $u\in K(\{w\})$, then $w\in K^{\rm rev}(\{u\})$.
\EL
In communication terms, the lemma says that if $u$ receives a message originating
at $w$ after $\tau$ rounds determined by $K$, then $w$ will receive a message
originating at $u$ after $\tau$ rounds determined by $K^{\rm rev}$.
\BPF
The condition $u\in K(\{w\})$ holds if and only if there exists a sequence
of edges $(e_{i_1},\dots,e_{i_r})$ such that
$e_{i_j}\in A_{i_j}^\circ$ for all $j$, the indices are increasing
in that $i_1<\cdots < i_r$, and
the sequence forms a path from $w$ to $u$. The presence of the
reversed sequence in $K^{\rm rev}$ implies $w\in K^{\rm rev}(\{u\})$.
\EPF


\subsection{Conductance}
The notion of \emph{graph conductance} was introduced by Sinclair~\cite{Sinclair93}. We require a
more general version, which we introduce here.  We begin with the requisite notation on
edge-weighted graphs. We assume that each edge $(u,v)$ has a weight $w_{uv} \in [0,1]$.  For an
unweighted graph $G =(V,E)$ and any $u,v \in V$, we define $w_{uv} = 1$ if $(u,v) \in E$ and $w_{uv} =
0$ otherwise.  Now we set $w(S,T)=\sum_{u\in S,v\in T} w_{uv}$. Note that in this definition it need
not be the case that $S\cap T = \emptyset$, so, e.g., $w(S,S)$, when applied to an unweighted graph,
counts every edge in $S$ twice.
The volume of a set $S\subseteq V$ with respect to $V$ is written as $\vol(S)=w(S,V)$. Sometimes we
will have different graphs defined over the same vertex set. In such cases, we will write the
identity of the graph as a subscript, as in $\vol_G(S)$, in order to clarify which is the ambient
graph (and hence the ambient edge set).  Further, we allow self-loops at the vertices.  A single
loop at $v$ of weight $\alpha$ is modeled by setting $w_{vv}=2\alpha$, because both ends of the edge
contribute $\alpha$.

For a graph $G=(V,E)$ and a cut $(S,T)$ where $S,T\subseteq V$ and $S\cap T=\emptyset $
(but where $T\cup S$ does not necessarily equal all of $V$),
the \textit{cut conductance} is given by
\beq \label{eq:cutcond}
\varphi(S,T)
	= \frac{w(S,T)}{\min\big\{\vol_G(S),\vol_G(T)\big\}}.
\eeq


For a subset $H\subseteq V$ we need to define the \textit{conductance of $H$ (embedded) in $V$}.
We will use this quantity to measure how quickly the {\rfont
  UniformGossip} algorithm
proceeds in $H$, while accounting for the
fact that edges in $(H,V- H)$ may slow down the process.
The conductance of $H$ in $G$ is defined by
\beq
\Phi(H)=\min_{S\subseteq H} \varphi(S,H- S)
\eeq

Note that the classical notion of conductance of $G$ (according to Sinclair~\cite{Sinclair93})
equals $\Phi(V)$ in our notation.

When we want to explicitly emphasize the ambient graph $G$ within which $H$
resides, we will write $\Phi_G(H)$.

A few arguments in this paper will benefit from the notion of a ``strongly induced'' graph of a vertex subset of
an ambient graph $G$.

\BD
Let $U\subseteq V$ be a vertex subset of $G$. The \textit{strongly induced} graph
of $U$ in $G$ is a (new) graph $H$ with vertex set $U$, whose
edge weight function $h:U\times U\to \R$ is defined by
\begin{align*}
h_{uv} =
\begin{cases}
w_{uv},\quad &\text{if $u\neq v$}, \\
w_{uu}+\sum_{x\in V- U}w_{ux},\quad &\text{if $u=v$.}
\end{cases}
\end{align*}
\ED

Note that by construction we have $\Phi_H(U) = \Phi_G(U)$.
The significance of this notion is the fact that the Markov process, describing
the vertex set in possession of some message $M_s$ for a starting vertex
$s\in U$ in the {\rfont UniformGossip} algorithm executed
on the strongly induced $H$, behaves identically to the respective process
in $G$ observed only on $U$. In particular, this definition allows us to use Theorem 1
of~\cite{Giakkoupis2011} in the following form:
\begin{theorem}\label{thm:grumor}
For any graph $G=(V,E)$ and a subgraph $U\subseteq V$ and any start vertex in $U$,
the broadcast time of the {\rfont UniformGossip} algorithm on $U$ is $O(\Phi_G(U)^{-1}\log|U|)$ rounds w.h.p.
\end{theorem}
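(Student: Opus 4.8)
The plan is to reduce the statement to Theorem~1 of~\cite{Giakkoupis2011} by running {\rfont UniformGossip} on the strongly induced graph $H$ of $U$ in $G$ in place of $G$ itself. First I would record the two bookkeeping facts that make this substitution legitimate. By the definition of the strongly induced graph, every edge leaving $U$ (together with any original self-loop) is folded into a self-loop at its $U$-endpoint, so that $\vol_H(u)=\vol_G(u)$ for each $u\in U$; consequently $w_H(S,U\setminus S)=w_G(S,U\setminus S)$ and $\vol_H(S)=\vol_G(S)$ for every $S\subseteq U$, whence $\Phi_H(U)=\Phi_G(U)$, and since $V(H)=U$ this common value is exactly the classical (Sinclair) conductance $\Phi(H)$ of $H$. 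Also $|V(H)|=|U|$. Theorem~1 of~\cite{Giakkoupis2011} applied to $H$ then gives a broadcast time of $O(\Phi(H)^{-1}\log|V(H)|)=O(\Phi_G(U)^{-1}\log|U|)$ with high probability, which is the claimed bound.

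The second ingredient, which is what makes this particular form of the statement the useful one, is that {\rfont UniformGossip} on $H$ is pessimistic: I would show by a coupling that the set of vertices of $U$ holding $M_s$ after $t$ rounds when the algorithm is run honestly on $G$ stochastically dominates the corresponding set when it is run on $H$. In one step, a vertex $u\in U$ activates an edge $(u,v)$ with $v\in U$ with the same probability $w_{uv}/\vol_G(u)$ in both executions (by the volume identity above), and with the remaining probability it activates its $H$-self-loop, respectively its original self-loop or an edge of $G$ leaving $U$; couple these so that the $U$-to-$U$ activations coincide. Since activating an edge that leaves $U$ informs no new vertex of $U$ in that round, an induction on $t$ shows that the $G$-execution does inside $U$ everything the $H$-execution does, and may in addition relay $M_s$ through $V\setminus U$ and back, which by the monotonicity of {\rfont UniformGossip} only accelerates the spread within $U$. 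Hence the broadcast time on $U$ (interpreted, as in the paragraph preceding the theorem, via the strongly induced graph) upper bounds the broadcast time of the honest process observed on $U$, and the bound above transfers.

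Two routine points remain. One must check that Theorem~1 of~\cite{Giakkoupis2011} applies to the weighted graph $H$ with self-loops; depending on the exact form in which that theorem is stated this may be immediate, and otherwise it follows by taking the weights $w_{uv}$ rational (an arbitrarily small perturbation, under which the target bound is monotone in $\Phi_G(U)^{-1}$), clearing denominators, and replacing a weight-$k$ edge or self-loop by $k$ parallel unweighted copies, which alters neither $V(H)$, nor the {\rfont UniformGossip} transition probabilities, nor $\Phi(H)$. One must also dispose of the degenerate case $\Phi_G(U)=0$, where the bound is vacuous.

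The step I expect to require the most care is the coupling: making precise what ``observed only on $U$'' means --- possession of $M_s$ is tracked only among vertices of $U$, while the message may travel freely through all of $G$ --- and checking, round by round, that an activation of an edge of $G$ leaving $U$ is exactly mirrored by an activation of the corresponding self-loop of $H$, with volumes and initiating endpoints matching up. Everything after that is a direct citation of~\cite{Giakkoupis2011}.
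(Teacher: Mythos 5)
Your proposal is correct and follows the same route the paper takes: reduce to Theorem~1 of Giakkoupis via the strongly induced graph $H$ of $U$, using the volume-preserving self-loops to get $\Phi_H(U)=\Phi_G(U)$. The paper merely asserts that the process on $H$ matches the process on $G$ ``observed only on $U$''; your coupling/domination argument is the careful version of that assertion, so you have simply filled in details the paper leaves implicit.
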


\section{Solving \NeighborExchange in $O(\log^3 n)$ Rounds} \label{sec:super}
The idea behind our algorithm for solving the \NeighborExchange problem is as follows. For every graph there exists a partition into clusters
whose conductance is high, and therefore the {\rfont UniformGossip}
algorithm allows information to spread quickly in each cluster. The latter further
implies that pairs of neighbors inside a cluster exchange their messages
quickly  (perhaps indirectly). What remains is to exchange messages across inter-cluster edges. This is done recursively.
In the following subsection we describe the conductance decomposition and then in Subsection~\ref{sec:alg} we give the details for the algorithm together with the proof of correctness.

\subsection{Conductance Decomposition of a Graph}
As described, our first goal is to partition the graph into clusters
with large conductance. The challenge here is to do so while limiting
the number of inter-cluster edges, so that we can efficiently apply this argument 
recursively. (Otherwise, this could be trivially done
in any graph, for example by having each node as a separate cluster.)
We are going to achieve this in the following lemma whose proof (found in Appendix~\ref{app:balcut}) is
very similar to that of Theorem 7.1 in~\cite{SpectralSparse}. Note that
for our eventual algorithm, we are only going to need an existential
proof of this clustering and not an actual algorithm for finding it.

\BL \label{lemma:balcut}
Let $S\subseteq V$ be of maximum volume such that $\vol(S)\le \vol(V)/2$
and $\varphi(S,V- S)\le \xi$, for a fixed parameter $\xi\ge\Phi(G)$.
If $\vol(S)\le\vol(V)/4$, then $\Phi(V- S) \ge \xi/3$.
\EL

Lemma~\ref{lemma:balcut} says that if a graph has no sparse balanced cuts,
then it has a large subgraph which has no sparse cuts.
The following corollary establishes that Lemma~\ref{lemma:balcut} holds even
in the case when the ambient graph is itself a subgraph of a larger graph.

\BC \label{cor:balcut}
Let $U \subseteq V$ and let $S \subseteq U$ be of maximum volume such that $\vol(S)\le \vol(U)/2$
and $\varphi(S,U- S)\le \xi$, for a fixed parameter $\xi\ge\Phi(U)$.
If $\vol(S)\le\vol(U)/4$, then $\Phi(U- S) \ge \xi/3$.
\EC

\BPF
Observe that the proof of Lemma~\ref{lemma:balcut} holds when
the graph has loops, i.e. $w_{uu}\neq 0$ for some $u$'s.
Let $H$ be the strongly induced graph of $U$.
It follows from the definition that for any two disjoint sets
$A,B\subseteq U$ we have $\vol_G(A)=\vol_H(A)$ and $w(A,B)=h(A,B)$. We can therefore apply
Lemma~\ref{lemma:balcut} to $H$ and deduce that the statement holds
for the respective sets in $G$.
\EPF

We are now ready to state and analyze the strong clustering algorithm. We emphasize that this is not a distributed algorithm, but an algorithm that only serves as a proof of existence of the partition.
First, consider the following subroutine:
\begin{quote}
\small{
\sl
\texttt{Cluster($G$,$U$,$\xi$):}
\begin{itemize}
\item[]
The inputs are a graph $G=(V,E)$, a subset $U\subseteq V$ and a parameter $0<\xi<1$.
\item[1.]
Find a subset $S\subseteq U$ of maximum volume such that
$\vol(S)\le\vol(U)/2$ and $\varphi(S,U- S)\le \xi$.
\item[2.]
If no such $S$ exists, then stop and output a single cluster $\{U\}$. Otherwise,
\item[3a.]
If $\vol(S)\le\vol(U)/4$, output $\{U- S\}\cup\text{\tt Cluster($G$,$S$,$\xi$)}$.
\item[3b.]
If $\vol(S)>\vol(U)/4$, output
$\text{\tt Cluster($G$,$S$,$\xi$)}\cup\text{\tt Cluster($G$,$U- S$,$\xi$)}$.
\end{itemize}
}
\end{quote}
The clustering algorithm for a graph $G=(V,E)$ is simply a call to \texttt{Cluster($G$,$V$,$\xi$)}.
The following theorem is proven in Appendix~\ref{app:clstr}.

\BT\label{thm:clstr}
For every $0<\zeta<1$, every graph $G=(V,E)$ with edge weights
$w_{uv}\in\{0\}\cup[1,+\infty)$ has a partition
$V=V_1\cup\cdots\cup V_k$ such that
$\Phi(V_i)\ge \frac{\zeta}{\log_{4/3}\vol(V)}$, for all $i$, and
$\sum_{i<j}w(V_i,V_j)\le \frac{3\zeta}{2}\vol(V)$.
\ET

In this paper, we are going to use the following specialization of this theorem, obtained by plugging in $\zeta = 1/3$:
\BC\label{cor:cluster}
Every unweighted graph on $m$ edges has a clustering
that cuts at most $\frac{m}{2}$ edges and each cluster has conductance at least
$\frac{1}{3\log_{4/3} 2m}$.
\EC

\subsection{The {\tt Superstep} Algorithm for the \NeighborExchange Problem}  \label{sec:alg}

In this section, we will the describe the {\tt Superstep} algorithm, which
solves the \NeighborExchange problem. Recall that, for this problem, all vertices
$v$ are assumed to possess an initial message $M_v$, and the goal is for every pair of neighbors
to know each other's initial messages.

We now describe our communication protocol, which specifies a local,
per-vertex rule that tells a node which
edge to choose for communication at any given round. It is assumed
that the node will greedily transmit all
messages known to it whenever an edge is chosen for communication.
The protocol described here
will employ some auxiliary messages, which are needed exclusively
for its internal workings.

The {\tt Superstep} subroutine described in this section is
designed to ensure that, after a single invocation, all neighbors $(u,w)$
in an undirected graph $G$ have exchanged each other's initial messages.
Clearly then, $D$ invocations of {\tt Superstep}, where $D$
is the diameter of $G$, ensure that a message starting at
vertex $v$ reaches all $u\in V$, and this holds for all messages.
$D$ invocations of {\tt Superstep} thus resolve the \Rumor problem.

If $E$ is a set of undirected edges, 
let $\vec{E}=\{(u,w):\{u,w\}\in E\}$ be the corresponding directed graph.

\begin{quote}
\small{
\sl
\texttt{Superstep($G$,$\tau$):}\\
The parameter $G=(V,E)$ is an unweighted, undirected graph,
and $\tau$ is a positive integer.\\
Set $F_0 := \vec{E}$ and $i:=0$. While $F_i \neq \emptyset$, repeat:
\begin{itemize}
\item[1.] {\it (First half)}
\begin{itemize}
	\item[1a.] Initialize every vertex $v$ with a new auxiliary
	message $a(v)$, unique to $v$. (This messages is added to the set of
	initial messages that $v$ happens to know currently.)
	\item[1b.] Perform the {\rfont UniformGossip} algorithm with respect to
	$F_i$ for $\tau$ rounds. And denote the outcome of the random activated edge
	choices by $K_i$
	\item[1c.] For every vertex $u$ and neighbor $w$,
	let $X_{uw}$ be the indicator that $u$ received $a(w)$
\end{itemize}
\item[2.] {\it (Second half)}
\begin{itemize}
	\item[2a.] Initialize every vertex $v$ with a fresh auxiliary
	message $b(v)$, unique to $v$
	\item[2b.] Perform $K_i^{\rm rev}$, the reverse process of the
	one realized in Step 1b
	\item[2c.] For every vertex $u$ and neighbor $w$,
	let $Y_{uw}$ be the indicator that $u$ received $b(w)$
\end{itemize}
	\item[3.] \textit{(Pruning)}
	Compute the set of pruned \textit{directed} edges
$P_i=\big\{(u,w):X_{uw}+Y_{uw} > 0\big\}$
\item[4.]
Set $F_{i+1}:=F_i-P_i$ and $i := i+1$
\end{itemize}
}
\end{quote}
It is easily verified that the above algorithm can be implemented
in the \GOSSIP model of communication.

\BT\label{thm:super}
Let $G=(V,E)$ be an undirected, unweighted graph with $|V|=n$ and $|E|=m$.
Then, after one invocation of {\rm{\tt Superstep($G$,$\tau$)}},
where $\tau=\Theta\big(\log^2 m\big)$,
the following hold
with probability $1-1/n^{\Omega(1)}$:
\begin{itemize}
\item[(i)] Every pair of neighbors $\{u, w\}\in E$ receive each other's messages.
\item[(ii)] The algorithm performs
	$\Theta\big(\log^3 m\big)$ distributed rounds.
\end{itemize}
\ET

Our proof of Theorem~\ref{thm:super} is structured as follows.
Let $\vec{E}=F_0,\dots, F_d=\emptyset$ be the respective edge sets of each
iteration in {\tt Superstep}. We are going to show that, with probability $1-1/n^{\Omega(1)}$,
the following invariants
are maintained at each iteration:
\begin{itemize}
\item[(a)] The directed edge set $F_i$ is symmetric in the sense that
$(u,w)\in F_i \Rightarrow (w,u)\in F_i$,
\item[(b)] The size of $F_i$ reduces by a constant factor at each iteration.
Formally,
$\vol(F_{i+1})\le \frac{1}{2} \vol(F_i)$, and
\item[(c)] After the $i$-th iteration, for every $(u,w)\in \vec{E}-F_{i+1}$, vertex $u$ has
received the message of vertex $w$ and vice-versa.
\end{itemize}

Since $F_d=\emptyset$, claim (c) implies part (i) of Theorem~\ref{thm:super}.
Claim (b) implies that the maximum number of iterations is $\log 2m$. Noting that
every iteration entails $2\tau$ distributed rounds, establishes part (ii) of
Theorem~\ref{thm:super}.

\BPF[Proof of Claim ({\rm a}):]
Initially, $F_0$ is symmetric by construction. Inductively, assume that $F_i$ is
symmetric. The Reversal Lemma applied to $K_i$ and $K_i^{\rm rev}$ implies $X_{uw}=Y_{wu}$, for
all $u,w\in V$. This in turn implies that
$X_{uw}+Y_{uw}=X_{wu}+Y_{wu}$, so $P_i$ is symmetric.
Since $F_i$ is symmetric by hypothesis,
we can conclude that $F_{i+1}=F_i-P_i$ is symmetric as well.
\EPF

\BPF[Proof of Claim ({\rm b}):]
Consider the graph $G_i=(V,F_i)$ on the edge set $F_i$. Since $F_i$ is symmetric, by Claim (a),
we can treat $G_i$ as undirected for the purposes of analyzing the { \rfont UniformGossip} algorithm.
Let $V_1\cup\cdots\cup V_k$ be the decomposition of $G_i$ promised by Corollary~\ref{cor:cluster}.
(Note that the corollary holds for disconnected graphs, which may arise.)
We thus have $\Phi(V_j)\ge \frac{1}{3\log {4/3} 2m}$, for all $1\le j\le k$.

The choice $\tau=O\big(3\log_{4/3}2m\cdot\log m\big)$ ensures,
via Theorem~\ref{thm:grumor}, that the first { \rfont UniformGossip} execution in every iteration
mixes on all $V_j$ with probability $1-1/n^{\Omega(1)}$. Mixing in $V_j$ implies that
for every internal edge $(u,w)$, where $u,w\in V_j$ and $(u,w)\in F_i$, the
vertices $(u,w)$ receive each other's auxiliary messages. The latter is summarized
as $X_{uw}=X_{wu}=1$. Applying the Reversal Lemma to the second execution of the {\rfont UniformGossip} algorithm,
we deduce that $Y_{uw}=Y_{wu}=1$ as well.
These two equalities imply, by the definition of $P_i$, that
$P_i$ is a superset of the edges not cut by the decomposition $V_1\cup\cdots\cup V_k$.
Equivalently, $F_{i+1}$ is a subset of the cut edges. Corollary~\ref{cor:cluster}, however,
bounds the volume of the cut edges by $\frac{1}{2}\vol(F_i)$, which concludes the proof of
Claim (b).
\EPF

\BPF[Proof of Claim ({\rm c}):]
Initially, $\vec{E}-F_0=\emptyset$ and so the claim holds trivially. By induction,
the claim holds for edges in $\vec{E}-F_i$. And so it suffices to establish that
$u$ and $v$ exchange their respective payload messages for all $(u,w)\in P_i$. However, this
is equivalent to the conditions $X_{uw}+Y_{uw}>0$, which are enforced by the
definition of $P_i$.
\EPF

Finally, our main result, Theorem~\ref{thm:super-sketch}, follows as a corollary of Theorem~\ref{thm:super}.

\section{Solving \NeighborExchange in Hereditary Sparse Graphs} \label{sec:sparse}

Now we ask what can be achieved if instead of
exchanging information indirectly as done in the {\rfont Superstep} algorithm, we exchange information only directly between neighbors. We will show in this section that this results in very simple deterministic algorithms for an important class of graphs that includes bounded genus graphs and all graphs that can be characterized by excluded minors~\cite{mader67minor,minor}.  The results here will be used for the more general simulators in Section~\ref{sec:simulators}.

As before we will focus on solving the \NeighborExchange problem. One trivial way to solve this problem is for each node to contact its neighbors directly, e.g., by using a simple round robin method. This takes at most $\Delta$ time, where $\Delta$ is the maximum-degree of the network. However, in some cases direct message exchanges work better. One graph that exemplifies this is the star graph on $n$ nodes. While it takes $\Delta = n$ time to complete a round robin in the center, after just a single round of message exchanges each leaf has initiated a bidirectional link to the center and thus exchanged its messages. On the other hand, scheduling edges cannot be fast on dense graphs with many more edges than nodes. The following lemma shows that the \emph{hereditary density} captures how efficient direct message exchanges can be on a given graph:

\begin{lemma}\label{lem:sched}
Let the hereditary density $\delta$ of a graph $G$ be the minimal integer such that for every subset of nodes $S$ the subgraph induced by $S$ has at most density $\delta$, i.e., at most $\delta|S|$ edges.
\begin{enumerate}
	\item Any schedule of direct message exchanges that solves the \NeighborExchange problem on $G$ takes at least $\delta$ rounds.
	\item There exists a schedule of the edges of $G$ such each node needs only $2 \delta$ direct message exchanges to solve the \NeighborExchange problem.
\end{enumerate}
\end{lemma}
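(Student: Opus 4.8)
The plan is to read a ``schedule of direct message exchanges'' as an assignment of rounds to edges in which, in each round, every node is the initiator of at most one exchange along an incident edge (the \GOSSIP constraint), and in which a node learns a neighbor's initial message precisely when the connecting edge is activated in some round (no multi-hop forwarding). Under this reading, solving \NeighborExchange by direct exchanges is equivalent to covering every edge of $G$ by at least one round, where each round is a set of edges that can be oriented so that every vertex has out-degree at most one. Throughout, for $S\subseteq V$ let $e(S)$ denote the number of edges induced by $S$; we may assume $G$ has an edge, since otherwise $\delta=0$ and both claims are vacuous.

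For part (1), I would exploit the minimality of $\delta$: since $\delta-1$ does not satisfy the defining condition, there is a subset $S$ with $e(S)>(\delta-1)|S|$. In any single round, every vertex of $S$ initiates at most one exchange, so at most $|S|$ of the edges inside $S$ are activated in that round (charge each activated edge inside $S$ to the endpoint in $S$ that initiated it). Since every such edge must be activated at least once, a schedule of $r$ rounds satisfies $r|S|\ge e(S)>(\delta-1)|S|$, hence $r>\delta-1$ and therefore $r\ge\delta$.

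For part (2), I would pass from the hereditary-density hypothesis to bounded degeneracy. Since every $S$ induces at most $\delta|S|$ edges, the average degree of every induced subgraph is at most $2\delta$, so every induced subgraph has a vertex of degree at most $2\delta$; thus $G$ is $2\delta$-degenerate. Repeatedly deleting a minimum-degree vertex yields an ordering $v_1,\dots,v_n$ in which each $v_i$ has at most $2\delta$ neighbors among $v_{i+1},\dots,v_n$; orienting every edge towards its later endpoint then gives an orientation with maximum out-degree at most $2\delta$. The schedule is: in round $i$, for $i=1,\dots,2\delta$, every vertex initiates an exchange along its $i$-th outgoing edge (if it has one). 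This obeys the \GOSSIP constraint, uses at most $2\delta$ rounds, and activates every edge exactly once, so every pair of neighbors exchanges initial messages; in particular each node acts as initiator in at most $2\delta$ direct exchanges, as claimed.

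The argument is elementary, so there is no real obstacle; the only points needing care are fixing the communication model precisely enough to justify the ``at most $|S|$ activated internal edges per round'' bound in part (1), and correctly extracting the out-degree-$2\delta$ orientation from the density bound in part (2), where the factor $2$ enters through the handshake/averaging step. I do not expect to invoke the full Nash--Williams / pseudoarboricity characterization, since we only need $2\delta$ rather than the optimal number of rounds.
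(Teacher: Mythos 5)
Your proposal is correct. Part (1) is the same pigeonhole argument as the paper's, and your version is in fact slightly more careful: the paper asserts that some $S$ induces "at least $\delta|S|$" edges, whereas the minimality of $\delta$ literally gives $e(S)>(\delta-1)|S|$, which is exactly what you use and is all that is needed to conclude $r\ge\delta$. Part (2) rests on the same core idea as the paper's proof --- peel off low-degree vertices and let each peeled vertex take responsibility for its remaining incident edges --- but you instantiate it as the classical one-vertex-at-a-time degeneracy ordering, yielding an orientation of out-degree at most $2\delta$ and hence a $2\delta$-round schedule directly. The paper instead peels all vertices of degree at most $2(1+\eps)\delta$ in parallel in each phase, which removes a constant fraction of the vertices per phase; this costs a $(1+\eps)$ factor that must then be eliminated by choosing $\eps<(3\delta)^{-1}$, but it buys an $O(\eps^{-1}\log n)$-round \emph{distributed} algorithm in the \LOCAL model, which the paper reuses immediately afterwards as the basis of the {\rfont DirectExchange} algorithm in the \GOSSIP model. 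So your route is cleaner for the purely existential statement of the lemma, while the paper's is deliberately constructive; both are valid, and your explicit framing of a round as an edge set orientable with out-degree one is a good way to make the model assumption in part (1) precise.
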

\begin{proof}
Since the hereditary density of $G$ is $\delta$, there is a subset of nodes $S \subseteq V$ with at least $\delta|S|$ edges between nodes in $S$. In each round, each of the $|S|$ nodes is allowed to schedule at most one message exchange, so a simple pigeonhole principle argument shows that at least one node needs to initiate at least $\delta$ message exchanges.

For the second claim, we are going to show that for any $\eps >0$ there is an $O(\eps^{-1}\log{n})$-time deterministic distributed algorithm in the \LOCAL model that assigns the edges of $G$ to nodes such that each node is assigned at most $2(1+\eps)\delta$ edges. Then setting $\eps < (3\delta)^{-1}$ makes the algorithm inefficient but finishes the existential proof. 

The algorithm runs in phases in which, iteratively, a node takes responsibility for some of the remaining edges connected to it. All edges that are assigned are then eliminated and so are nodes that have no unassigned incident edges. In each phase, every node of degree at most $2(1+\eps)\delta$ takes responsibility for all of its incident edges (breaking ties arbitrarily). At least a $1/(1+\frac{1}{\eps})$ fraction of the remaining nodes fall under this category in every phase. This is because otherwise, the number of edges in the subgraph would be more than $(|S|-|S|/(1+\frac{1}{\eps}))(2(1+\eps)\delta)/2=|S|\delta$, which would contradict the fact thatthe hereditary densityof the graph equals $\delta$ of. What remains after each phase is an induced subgraph which, by definition of the hereditary density, continues to have hereditary density at most $\delta$. The number of remaining nodes thus decreases by a factor of $1-1/(1+\frac{1}{\eps})$ in every phase and it takes at most $O(\log_{1+\eps}{n})$ phases until no more nodes remain, at which point all edges have been assigned to a node.
\end{proof}

We note that the lower bound of Lemma~\ref{lem:sched} is tight in all
graphs, i.e., the upper bound of $2\delta$ can be improved to
$\delta$. Graphs with hereditary density $\delta$, also known as
$(0,\delta)$-sparse graphs, are thus exactly the graphs in which
$\delta$ is the minimum number such that the edges can be oriented to
form a directed graph with outdegree at most $\delta$. This in turn is
equivalent to the \emph{pseudoarboricity} of the graph, i.e., the
minimum number of pseudoforests needed to cover the graph. Due to the
matroid structure of pseudoforests, the pseudoarboricity can be
computed in polynomial time. For our purposes the (non-distributed)
algorithms to compute these optimal direct message exchange schedule
are too slow. Instead, we present a simple and fast algorithm, based on
the \LOCAL algorithm in Lemma~\ref{lem:sched}, which computes a
schedule that is within a factor of $2 + \eps$ of the optimal. We note that the {\rfont DirectExchange} algorithm presented here works in the \GOSSIP model and furthermore does not require the hereditary density $\delta$ to be known \emph{a priori}.
The following is the algorithm for an individual node $v$:

\begin{quote}
\small{
\sl
\noindent \texttt{DirectExchange:}\\
Set $\delta' = 1$ and $H=\emptyset$. $H$ is the subset of neighbors in $\Gamma(v)$ that node $v$ has
exchanged messages with.
Repeat:
\begin{itemize}
\item[]
    $\delta' = (1+ \eps) \delta'$
\item[]
    for $O(\frac{1}{\eps}\cdot\log n)$ rounds do
    \begin{itemize}
    \item[]
        if $|\Gamma(v)\setminus H| \leq \delta'$
        \begin{itemize}
        \item[]
        		during the next $\delta'$ rounds exchange
                        messages with all neighbors in
                        $\Gamma(v)\setminus H$
        \item[]
        		terminate
        \end{itemize}
        \item[]
        else
        \begin{itemize}
        \item[]
            wait for $\delta'$ rounds
        \end{itemize}
        \item[]
        update $H$
    \end{itemize}
\end{itemize}
}
\end{quote}

\begin{theorem}
For any constant $\eps > 0$, the deterministic algorithm {\rfont DirectExchange}
solves the \NeighborExchange problem in the \GOSSIP model using $O(\frac{\delta\log{n}}{\eps^2})$ rounds, where
$\delta$ is the hereditary density of the underlying topology. During
the algorithm, each node initiates at most $2(1+\eps)^2 \delta$
exchanges.
\end{theorem}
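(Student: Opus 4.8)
The plan is to show that, although {\rfont DirectExchange} is described as a local, stateful per-node rule, its global behavior is exactly the degree-peeling process underlying Lemma~\ref{lem:sched}, and then to charge the running time to a geometric sum. First I would set up the bookkeeping. Every node starts at the same time, increments $\delta'$ through the same sequence of values $(1+\eps),(1+\eps)^2,\dots$, and spends exactly $\delta'$ rounds on each inner sub-iteration no matter which branch it takes; hence all nodes remain in lockstep and there is a well-defined global notion of ``phase $j$'' (during which $\delta'=(1+\eps)^j$) and of the $t$-th sub-iteration. Let $A_t$ be the set of nodes still active (i.e.\ not yet having executed {\rfont terminate}) at the start of sub-iteration $t$. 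The key structural claim is that \emph{the set of not-yet-exchanged edges incident to an active vertex $v$ is exactly $\Gamma(v)\cap A_t$}, so the residual graph is the induced subgraph $G[A_t]$. Indeed, a node initiates an exchange only in the sub-iteration in which it terminates, and then it exchanges with \emph{all} of its remaining neighbors; therefore an edge $\{u,w\}$ is resolved precisely when the first of $u,w$ terminates, so it is unresolved at $v$ iff the other endpoint is still active. It follows that $v$ executes {\rfont terminate} in sub-iteration $t$ iff $\deg_{G[A_t]}(v)\le\delta'$ (the ``update $H$'' step at the end of each sub-iteration makes this count available in time), that it then needs at most $\delta'$ initiated exchanges which fit into its $\delta'$ rounds, and that $A_{t+1}=A_t\setminus\{v\in A_t:\deg_{G[A_t]}(v)\le\delta'\}$. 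Two nodes terminating in the same sub-iteration cause no scheduling conflict, since being contacted is free in the \GOSSIP model, so the whole schedule is \GOSSIP-legal.

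Next I would run the peeling analysis of Lemma~\ref{lem:sched}. Because $G[A_t]$ is an induced subgraph of $G$ it has at most $\delta|A_t|$ edges, so by Markov's inequality fewer than $\frac{|A_t|}{1+\eps}$ of its vertices have degree exceeding $2(1+\eps)\delta$. Hence in any sub-iteration whose parameter satisfies $\delta'\ge 2(1+\eps)\delta$, at least a $\frac{\eps}{1+\eps}$ fraction of the active vertices terminate, i.e.\ $|A_{t+1}|<|A_t|/(1+\eps)$. Since $\delta'$ grows geometrically, the first phase with $\delta'\ge 2(1+\eps)\delta$ has parameter $\delta^{*}<2(1+\eps)^2\delta$, and it runs for $\Theta(\tfrac1\eps\log n)$ sub-iterations; taking the hidden constant large enough that this count exceeds $\log_{1+\eps}n$ forces $A=\emptyset$ by the end of that phase. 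Consequently every node terminates in a phase with $\delta'\le\delta^{*}<2(1+\eps)^2\delta$, which yields the claimed bound of $2(1+\eps)^2\delta$ on the number of exchanges a node initiates; and since every edge is resolved as soon as one of its endpoints terminates, \NeighborExchange is solved.

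For the round count: phase $j$ consists of $\Theta(\tfrac1\eps\log n)$ sub-iterations of $\delta'=(1+\eps)^j$ rounds, i.e.\ $\Theta(\tfrac{\log n}{\eps}(1+\eps)^j)$ rounds, and no node runs past the phase with $\delta'=\delta^{*}$, i.e.\ past $j^{*}=\lceil\log_{1+\eps}(2(1+\eps)\delta)\rceil$. Summing the geometric series gives a total of $\Theta(\tfrac{\log n}{\eps})\sum_{j\le j^{*}}(1+\eps)^j=\Theta(\tfrac{\log n}{\eps^{2}}(1+\eps)^{j^{*}})=O(\tfrac{\delta\log n}{\eps^{2}})$ rounds; the early phases with small $\delta'$ contribute only a lower-order term, so the bound is tight up to the constant absorbing $(1+\eps)^2$.

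I expect the main obstacle to be the first paragraph: carefully justifying the reduction to the clean recursion $A_{t+1}=A_t\setminus\{v:\deg_{G[A_t]}(v)\le\delta'\}$ in the face of the algorithm's asynchronous-looking local description --- in particular that all nodes stay synchronized round-for-round despite some having terminated, that the ``update $H$'' step delivers the needed information at the right moment, and that a terminating node really does reach all its residual neighbors within $\delta'$ rounds without conflict. Once this structural reduction is established, the remainder is the degree-threshold argument of Lemma~\ref{lem:sched} together with a routine geometric-sum estimate.
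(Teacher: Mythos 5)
Your proposal is correct and follows essentially the same route as the paper: it invokes the degree-peeling argument from the proof of Lemma~\ref{lem:sched} to show every node terminates once $\delta'\ge 2(1+\eps)\delta$ (hence $\delta'\le 2(1+\eps)^2\delta$ at termination, bounding the initiated exchanges), and then bounds the total running time by the same geometric sum over phases. The only difference is presentational — you spell out the lockstep synchronization and the identification of the residual graph with $G[A_t]$, which the paper leaves implicit.
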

\begin{proof}
Let $\delta$ be the hereditary density of the underlying topology. We
know from the proof of Lemma~\ref{lem:sched} that the algorithm
terminates during the for-loop if $\delta'$ is at least
$2(1+\eps)\delta$. Thus, when the algorithm terminates, $\delta'$ is at
most $2(1+\eps)^2\delta$ which is also an upper bound on the number
of neighbors contacted by any node. In the $(i+1)^{\mathrm{th}}$-to-last iteration of the outer loop,
$\delta'$ is at most $2(1+\eps)^2\delta / (1 + \eps)^i$, and the
running time for this phase is thus at most $2(1+\eps)^2\delta /(1 + \eps)^i \cdot
O(\frac{1}{\eps} \log n)$. Summing up over these
powers of $1/(1+\eps)$
results in a total of at most $\delta/((1+\eps)-1) \cdot O(\frac{1}{\eps} \log
n) = O(\frac{\delta\log{n}}{\eps^2})$ rounds.
\end{proof}

\section{Simulators and Graph Spanners}\label{sec:simulators}


In this section we generalize our results to arbitrary simulations of \LOCAL algorithms in the \GOSSIP model and point out connections to graph spanners, another well-studied subject.

Recall that we defined the \NeighborExchange problem exactly in such a way that it simulates in the \GOSSIP model what is done in one round of the \LOCAL model. With our solutions, an $O(\delta \log n)$-round algorithm and an $O(\log^3 n)$-round algorithm for the \NeighborExchange problem in the \GOSSIP model, it is obvious that we can now easily convert any $T$-round algorithm for the \LOCAL model to an algorithm in the \GOSSIP model, e.g., by $T$ times applying the {\rfont Superstep} algorithm. In the case of the {\rfont DirectExchange} algorithm we can do even better. While it takes $O(\delta \log n)$ rounds to compute a good scheduling, once it is known it can be reused and each node can simply exchange messages with the same $O(\delta)$ nodes without incurring an additional overhead. Thus, simulating the second and any further rounds can be easily done in $O(\delta)$ rounds in the \GOSSIP model. This means that any algorithm that takes $O(T)$ rounds to complete in the \LOCAL model can be converted to an algorithm that takes $O(\delta T + \delta \log n)$ rounds in the \GOSSIP model. We call this a simulation and define simulators formally as follows.

\begin{definition}
An $(\alpha, \beta)$-simulator is a way to transform any algorithm $A$ in the \LOCAL model to an algorithm $A'$ in the \GOSSIP model such that $A'$ computes the same output as $A$ and if $A$ takes $O(T)$ rounds than $A'$ takes at most $O(\alpha T + \beta)$ rounds.
\end{definition}

Phrasing our results from Section~\ref{sec:alg} and Section~\ref{sec:sparse} in terms of simulators we get the following corollary.

\begin{corollary}\label{cor:simplesimulators}
For a graph $G$ of $n$ nodes, hereditary density $\delta$, and maximum degree $\Delta$, the following hold:
\begin{itemize}
	\item There is a randomized $(\log^3 n,0)$-simulator.
	\item There is a deterministic $(\Delta,0)$-simulator.
	\item There is a deterministic $(2(1+\eps)^2\delta,O(\delta \eps^{-2} \log n))$-simulator for any $\epsilon > 0$ or, simply, there is a $(\delta,\delta \log n)$-simulator.
\end{itemize}
\end{corollary}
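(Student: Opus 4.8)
The plan is to read off each of the three simulator bounds directly from the \NeighborExchange algorithms established earlier, using the obvious principle that a round of the \LOCAL model is nothing more than ``every node learns the current messages of all its neighbors,'' which is precisely the \NeighborExchange problem. So the generic recipe is: given a \LOCAL algorithm $A$ running in $O(T)$ rounds, simulate it round by round; before simulating round $t$, invoke a \NeighborExchange subroutine so that each node obtains the round-$(t-1)$ state of every neighbor, then have each node perform locally exactly the computation $A$ prescribes. Correctness of $A'$ is immediate from correctness of the \NeighborExchange subroutine, since after each subroutine call every node has exactly the information it would have in the corresponding \LOCAL round.

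For the first bullet, I would plug in the {\tt Superstep} algorithm: by Theorem~\ref{thm:super} (equivalently Theorem~\ref{thm:super-sketch}) one invocation solves \NeighborExchange in $O(\log^3 n)$ \GOSSIP rounds with high probability, so $T$ invocations give $O(T \log^3 n)$ rounds total, i.e., a randomized $(\log^3 n, 0)$-simulator (the additive term is $0$ because there is no preprocessing). For the second bullet, I would use the trivial round-robin schedule in which each node contacts its neighbors one at a time; this solves \NeighborExchange deterministically in $\Delta$ rounds with no setup, giving a deterministic $(\Delta, 0)$-simulator. For the third bullet, I would invoke the {\rfont DirectExchange} theorem: it computes, in $O(\delta \eps^{-2}\log n)$ \GOSSIP rounds, a fixed schedule assigning each node at most $2(1+\eps)^2\delta$ direct exchanges that solves \NeighborExchange. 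The key observation — already noted in the text preceding the corollary — is that once this schedule is computed it can be \emph{reused}: every subsequent simulated \LOCAL round just repeats the same $O(\delta)$ direct exchanges. Hence the first simulated round costs $O(\delta \eps^{-2}\log n) + 2(1+\eps)^2\delta$ and each of the remaining $O(T)$ rounds costs $2(1+\eps)^2\delta$, for a total of $O\big(2(1+\eps)^2\delta \, T + \delta\eps^{-2}\log n\big)$; this is a deterministic $(2(1+\eps)^2\delta, O(\delta \eps^{-2}\log n))$-simulator, and fixing any constant $\eps$ (say $\eps=1$) absorbs the constants to yield the simpler $(\delta, \delta \log n)$-simulator.

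There is essentially no obstacle here — the work was done in the earlier sections — but the one point that deserves a sentence of care is the reuse argument for {\rfont DirectExchange}: I should note that the schedule produced depends only on the topology $G$, not on the message contents, so it remains valid across all $T$ rounds; and that a node running {\rfont DirectExchange} terminates knowing exactly which set of neighbors it is responsible for initiating exchanges with, so in later rounds it needs no further coordination. I would also remark that the high-probability qualifier in the first bullet propagates: over $T$ invocations of {\tt Superstep} the failure probability is still $1/n^{\Omega(1)}$ by a union bound, provided $T = \poly(n)$, which is the only regime of interest.
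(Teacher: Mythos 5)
Your proposal is correct and follows exactly the route the paper takes: the corollary is justified by the discussion immediately preceding it, namely round-by-round simulation via {\tt Superstep} for the first bullet, trivial round-robin over neighbors for the second, and {\rfont DirectExchange} with reuse of the computed schedule (which depends only on the topology) for the third. Your added remarks about the union bound over $T$ invocations and the topology-only dependence of the schedule are sensible clarifications the paper leaves implicit.
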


\begin{sloppypar}Note that for computations that require many rounds in
  the \LOCAL model the $(2(1+\eps)^2\delta,O(\delta \eps^{-2} \log
  n))$-simulator is a $\log n$-factor faster than repeatedly applying
  the {\rfont DirectExchange} algorithm. This raises the question
  whether we can similarly improve our $(\log^3 n,0)$-simulator to
  obtain a smaller multiplicative overhead for the simulation.
\end{sloppypar}

What we would need for this is to compute, e.g., using the {\rfont Superstep} algorithm, a schedule that can then be repeated to exchange messages between every node and its neighbors. What we are essentially asking for is a short sequence of neighbors for each node over which each node can indirectly get in contact with all its neighbors. Note that any such schedule of length $t$ must at least fulfill the property that the union of all edges used by any node is connected (if the original graph $G$ is connected) and even more that each node is connected to all its neighbors via a path of length at most $t$. Subgraphs with this property are called \emph{spanners}. Spanners are well-studied objects, due to their extremely useful property that they approximately preserve distances while potentially being much sparser than the original graph. The quality of a spanner is described by two parameters, its number of edges and its \emph{stretch}, which measures how well it preserves distances.

\begin{definition}[Spanners]
A subgraph $S=(V,E')$ of a graph $G = (V,E)$ is called an \emph{$(\alpha, \beta)$-stretch spanner} if any two nodes $u,v$ with distance $d$ in $G$ have distance at most $\alpha d + \beta$ in $S$.
\end{definition}

From the discussion above it is also clear that any solution to the \NeighborExchange problem in the \GOSSIP model also computes a spanner as a byproduct.

\begin{lemma}\label{lem:simulatorimpliesspanner}
If $A$ is an algorithm in the \GOSSIP model that solves the \NeighborExchange problem in any graph $G$ in $T$ rounds then this algorithm can be used to compute a $(T,0)$-stretch spanner with hereditary density $T$ in $O(T)$ rounds in the \GOSSIP model.
\end{lemma}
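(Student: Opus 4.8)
The plan is to extract a spanner directly from the execution trace of the algorithm $A$ and then verify the three claimed properties: small stretch, small hereditary density, and fast construction in the \GOSSIP model.

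First I would define the spanner $S = (V, E')$ as follows. When $A$ runs on $G$, it solves \NeighborExchange, so for every edge $\{u,w\} \in E$, node $u$ learns $M_w$. The message $M_w$ reaches $u$ along some sequence of \GOSSIP communications; trace back, for each ordered pair $(u,w)$ with $\{u,w\} \in E$, a single path $p_{uw}$ in $G$ along which the information actually traveled (a node that forwarded $M_w$ to its eventual recipient got it from some earlier contact, etc.). Since $A$ runs for $T$ rounds, each such path has length at most $T$. Let $E'$ be the union over all such paths of all edges used. This is a subgraph of $G$ (every \GOSSIP communication is along an edge of $G$), and by construction each node $u$ is connected to each of its $G$-neighbors $w$ by a path of length at most $T$ inside $S$.

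Next I would check the stretch. For arbitrary $u,v$ at distance $d$ in $G$, take a shortest $G$-path $u = x_0, x_1, \dots, x_d = v$; each consecutive pair $\{x_i, x_{i+1}\}$ is a $G$-edge, hence joined in $S$ by a path of length $\le T$, so concatenating gives a $u$--$v$ walk in $S$ of length $\le Td$. Thus $S$ is a $(T, 0)$-stretch spanner. For the hereditary density bound, fix any $U \subseteq V$ and count the edges of $S$ inside $U$. Each edge of $S$ lies on at least one of the chosen paths $p_{uw}$, and we can charge it to the pair $(u,w)$ that "owns" that path; better, charge each edge of $S$ to the ordered pair $(u,w)$ whose path put it into $E'$, picking one owner per edge. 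Actually the cleaner accounting: orient each edge of $S$ and bound the out-degree. The standard argument here (mirroring Lemma~\ref{lem:sched}) is that the number of edges $E'$ contributes is at most $T \cdot |E|$ naively, which is too weak; instead one observes that in the \GOSSIP model each node initiates at most one contact per round, so over $T$ rounds each node initiates at most $T$ communications, hence the total number of \emph{distinct directed} communication events is at most $nT$, and the set of edges actually traversed is therefore coverable by orienting so each node has out-degree $\le T$. Restricting to $U$, the induced subgraph of $S$ inherits an orientation with out-degree $\le T$, so it has at most $T|U|$ edges, i.e. hereditary density $\le T$.

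Finally, the construction time: $S$ is identified as a byproduct of running $A$, which takes $T$ rounds; each node can locally record which of its incident edges it actually used for forwarding (and to whom), so no extra rounds are needed beyond an $O(1)$ bookkeeping pass, giving $O(T)$ rounds in the \GOSSIP model. The main obstacle I anticipate is the hereditary-density bound: one must argue carefully that the relevant edge set can be oriented with out-degree $\le T$ rather than merely that $|E'| \le nT$. The right handle is that in the \GOSSIP model the number of \emph{initiated} contacts per node per round is exactly one, so over $T$ rounds each node is the initiator of at most $T$ edge-uses; orienting each used edge away from its initiator (on the round it was first used, say) yields the out-degree bound, and this bound is hereditary because the orientation restricts to any induced subgraph. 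Everything else — the path extraction, the stretch concatenation — is routine.
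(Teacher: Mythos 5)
Your proposal is correct and follows essentially the same route the paper sketches (the lemma is stated there as "clear from the discussion above" without a formal proof): take the edges actually activated by $A$, use the length-$\le T$ communication path between each pair of $G$-neighbors for the $(T,0)$ stretch, and orient each used edge away from its initiator to get out-degree at most $T$, hence hereditary density $T$. Your explicit handling of the hereditary-density bound via the initiator orientation is exactly the right argument and is in fact more careful than what the paper writes down.
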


While there are spanners with better properties than the $(\log^3 n, 0)$-stretch and $\log^3 n$-density implied by Lemma~\ref{lem:simulatorimpliesspanner} and Theorem~\ref{thm:super}, our construction has the interesting property that the number of messages exchanged during the algorithm is at most $O(n \log^3 n)$, whereas all prior algorithms rely on the broadcast nature of the \LOCAL model and therefore use already $O(n^2)$ messages in one round on a dense graph. Lemma~\ref{lem:simulatorimpliesspanner} furthermore implies a nearly logarithmic lower bound on the time that is needed in the \GOSSIP model to solve the \NeighborExchange problem:

\begin{corollary} \label{cor:lowerbound}
For any algorithm in the \GOSSIP model that solves the \NeighborExchange problem there is a graph $G$ on $n$ nodes on which this algorithm takes at least $\omega(\frac{\log n}{\log \log n})$ rounds,.
\end{corollary}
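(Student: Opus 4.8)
The plan is to derive this lower bound from Lemma~\ref{lem:simulatorimpliesspanner} together with the classical trade-off between the sparsity and the stretch of a spanner (equivalently, the extremal bound on the number of edges in a high-girth graph). So suppose $A$ is a \GOSSIP-model algorithm that solves \NeighborExchange on every $n$-node graph in $T=T(n)$ rounds. By Lemma~\ref{lem:simulatorimpliesspanner}, running $A$ on any $n$-node graph $G$ produces, in $O(T)$ rounds, a spanning subgraph $S\subseteq G$ that is a $(T,0)$-stretch spanner of $G$ and has hereditary density at most $T$; in particular $|E(S)|\le Tn$.

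The key elementary fact is that a low-stretch spanner of a high-girth graph must be the whole graph: if $G$ has girth $g$ and $S$ is a spanning subgraph of $G$ that is a multiplicative $\alpha$-spanner with $\alpha<g-1$, then $S=G$, since deleting any edge $\{u,v\}$ of $G$ would force $d_S(u,v)\ge g-1>\alpha=\alpha\cdot d_G(u,v)$, contradicting the spanner property. The next step is therefore to feed $A$ the extremal high-girth graphs: for each integer $k\ge 1$ fix an $n$-vertex graph $G_k$ of girth $>2k$ with $\Omega(n^{1+1/k})$ edges --- using the known cases of the Erd\H{o}s girth conjecture, or, at the cost of only a constant factor in the exponent, the unconditional construction obtained by starting from a suitable random graph and deleting one edge from each short cycle. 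If $T<2k$, then applying the fact above with $\alpha=T$ and $g>2k$ shows that the spanner $A$ produces on $G_k$ is all of $G_k$, so $|E(G_k)|=|E(S)|\le Tn$, i.e.\ $Tn=\Omega(n^{1+1/k})$, i.e.\ $T=\Omega(n^{1/k})$. Hence, for every $k$, we have $T\ge\min\{2k,\ \Omega(n^{1/k})\}$.

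It remains to optimize the choice of $k$. The two quantities $2k$ and $n^{1/k}$ are of the same order when $k$ is of order $\frac{\log n}{\log\log n}$, and for that choice both are of order $\frac{\log n}{\log\log n}$; this gives $T=\Omega\big(\tfrac{\log n}{\log\log n}\big)$, i.e.\ the claimed (nearly logarithmic, up to a $\log\log n$ factor) lower bound of Corollary~\ref{cor:lowerbound}. Here one should keep in mind that a single algorithm $A$ is being run on all of the graphs $G_k$ --- which is exactly what the ``for every $k$'' quantifier records --- and that $T$ is being viewed as a function of $n$.

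I expect the only non-routine ingredient to be pinning down the extremal high-girth graphs with the right parameters, and tracking how using the unconditional (rather than conjectural) girth bound weakens the exponent in $\Omega(n^{1/k})$, and hence the constant in front of $\frac{\log n}{\log\log n}$. The remaining pieces --- the spanner-from-algorithm reduction of Lemma~\ref{lem:simulatorimpliesspanner}, the one-line girth argument, and the balancing of $k$ --- are all short once that ingredient is in hand.
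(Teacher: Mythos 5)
Your proposal is correct and follows essentially the same route as the paper: invoke Lemma~\ref{lem:simulatorimpliesspanner} to turn the algorithm's communication pattern into a low-stretch, low-density spanner, observe that on a high-girth graph such a spanner must contain every edge, feed in the extremal dense high-girth graphs (the paper cites the Peleg--Sch\"affer construction with girth $r$ and density $\tfrac14 n^{1/r}$, which is the same unconditional random-deletion bound you mention), and balance $r$ against $n^{1/r}$ to get the $\tfrac{\log n}{\log\log n}$ threshold. The only differences are notational (your girth parameter $2k$ versus the paper's $r$, and your explicit remark that the conditional versus unconditional girth bounds only affect constants), so there is nothing further to add.
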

\begin{proof}
Assume an algorithm takes at most $T(n)$ rounds on any graph with $n$ nodes. The edges used by the algorithm form a $T(n)$-stretch spanner with density $T(n)$, as stated in Lemma~\ref{lem:simulatorimpliesspanner}. For values of $T(n)$ which are too small it is known that such spanners do not exist~\cite{peleg1989spanners}. More specifically it is known that there are graphs with $n$ nodes, density at least $1/4 n^{1/r}$ and girth $r$, i.e., the length of the smallest cycle is $r$. In such a graph any $(r-2)$-stretch spanner has to be the original graph itself, since removing a single edge causes its end-points to have distance at least $r-1$, and thus the spanner also have density $1/4 n^{1/r}$. Therefore $T(n) \geq \argmin_r \{r-2,1/4n^{1/r}\} = \omega(\frac{\log n}{\log \log n})$.
\end{proof}

Interestingly, it is not only the case that efficient simulators imply good spanners but the next lemma shows as a converse that good existing spanner constructions for the \LOCAL model can be used to improve the performance of simulators.

\begin{theorem}\label{thm:spannerimprovement}
If there is an algorithm that computes an $(\alpha,\beta)$-stretch spanner with hereditary density $\delta$ in $O(T)$ rounds in the \LOCAL model than this can be combined with an $(\alpha', \beta')$-simulator to an $(\alpha \delta, T \alpha' + \beta' + \delta \log n + \delta \beta)$-simulator.
\end{theorem}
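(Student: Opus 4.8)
The plan is to build the spanner $S$ inside the \GOSSIP model using the given $(\alpha',\beta')$-simulator, and then to execute the entire $T_A$-round run of the \LOCAL algorithm $A$ that we are simulating by a single ``flood, then finish locally'' pass over the sparse graph $S$ --- choosing the flooding radius once, up front, just large enough that the stretch of $S$ delivers to every node exactly the data a \LOCAL algorithm needs.

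First I would construct the spanner. Feed the $O(T)$-round \LOCAL spanner algorithm into the $(\alpha',\beta')$-simulator; this yields a \GOSSIP algorithm running in $O(\alpha' T + \beta')$ rounds after which every node knows its incident edges of $S$ (we may assume this output is symmetric, as is standard; each node also keeps, for free, the list of its $G$-neighbors, which it knew initially). From this point on nodes only ever contact $S$-neighbors, so we are effectively running a \GOSSIP algorithm on $S$, and $S$ has hereditary density $\delta$.

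Next, fix $t^\ast := \alpha T_A + \beta$, where $T_A$ is the number of rounds $A$ takes, and consider the \LOCAL algorithm $A''$ \emph{on $S$} that floods for $t^\ast$ rounds --- in each round every node forwards to all its $S$-neighbors everything it knows --- where each node's payload is its input, its $G$-adjacency list, and (if $A$ is randomized) one private random string sampled at the outset, long enough for one execution of $A$; after the flood every node $u$ reconstructs the subgraph of $G$ induced on $B_G(u,T_A)$ --- the set of nodes at $G$-distance at most $T_A$ from $u$ --- together with all its inputs and random strings, and then locally simulates $A$ for $T_A$ rounds, outputting what $A$ would. For correctness, after $t^\ast$ flooding rounds $u$ knows the payloads of exactly the nodes at $S$-distance at most $t^\ast$ from $u$; since $S$ is an $(\alpha,\beta)$-stretch spanner, $B_G(u,T_A)$ lies inside the $S$-ball of radius $\alpha T_A + \beta = t^\ast$ around $u$, so $u$ learns all of $B_G(u,T_A)$, and because it has the $G$-adjacency lists it recovers the true $G$-edges there (not merely the $S$-edges) --- which is precisely what determines a node's state and output after $T_A$ rounds of a \LOCAL algorithm.

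Finally I would observe that $A''$ runs in $t^\ast$ rounds in the \LOCAL model on a graph of hereditary density $\delta$, so applying the {\rfont DirectExchange}-based $(\delta,\delta\log n)$-simulator of Corollary~\ref{cor:simplesimulators} to $A''$ (which needs no advance knowledge of $\delta$) gives a \GOSSIP algorithm running in $O(\delta t^\ast + \delta\log n) = O(\alpha\delta\,T_A + \delta\beta + \delta\log n)$ rounds; together with the $O(\alpha' T + \beta')$ cost of building the spanner this totals $O(\alpha\delta\,T_A) + O(\alpha' T + \beta' + \delta\log n + \delta\beta)$, i.e.\ an $(\alpha\delta,\ T\alpha' + \beta' + \delta\log n + \delta\beta)$-simulator. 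The one subtlety to get right --- and the reason the dependence on $\beta$ comes out additive ($\delta\beta$) rather than as a spurious multiplicative $\Omega(\beta\delta T_A)$ term --- is that one must \emph{not} simulate $A$ round by round, for a single $G$-edge becomes a path of length up to $\alpha+\beta$ in $S$ and hence would cost about $(\alpha+\beta)\delta$ per simulated round; instead one floods $S$ a single time to radius $\alpha T_A+\beta$ and replays all of $A$'s rounds by local computation, while remembering to carry $G$-adjacency lists (and, for randomized $A$, the random strings) along the flood so each node can rebuild the $G$-topology, and not just the $S$-topology, of its $T_A$-ball. Since both models permit messages of unbounded size, these enlarged payloads cost nothing.
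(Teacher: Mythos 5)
Your proposal is correct and follows essentially the same route as the paper's proof: build $S$ via the given simulator, replace the round-by-round simulation of $A$ by a single flood of inputs/adjacency/randomness to $S$-radius $\alpha T_A+\beta$ followed by local replay, and then simulate that flood on the sparse graph $S$ with the $(\delta,\delta\log n)$-simulator. The only difference is cosmetic --- you apply the ``gather-then-compute'' argument uniformly rather than first treating $\beta=0$ round by round, and you spell out the need to carry $G$-adjacency lists, which the paper subsumes under ``initial knowledge.''
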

\begin{proof}
For simplicity we first assume that $\beta = 0$, i.e., the spanner $S$ computed by the algorithm in the \LOCAL model has purely multiplicative stretch $\alpha$ and hereditary density $\delta$. Our strategy is simple: We are first going to compute the good spanner by simulating the spanner creation algorithm from the \LOCAL model using the given simulator. This takes $T \alpha' + \beta'$ rounds in the \GOSSIP model. Once this spanner $S$ is computed we are only going to communicate via the edges in this spanner. Note that for any node there is a path of length at most $\alpha$ to any of its neighbors. Thus if we perform $\alpha$ rounds of \LOCAL-flooding rounds in which each node forwards all messages it knows of to all its neighbors in $S$ each node obtains the messages of all its neighbors in $G$. This corresponds exactly to a \NeighborExchange in $G$. Therefore if we want to simulate $T'$ rounds of an algorithm $A$ in the \LOCAL model on $G$ we can alternatively perform $\alpha T$ \LOCAL computation rounds on $S$ while doing the \LOCAL computations of $A$ every $\alpha$ rounds. This is a computation in the \LOCAL model but on a sparse graph. We are therefore going to use the $(O(\delta),O(\delta \log n))$-simulator from Corollary~\ref{cor:simplesimulators} to simulate this computation which takes $O(\delta \alpha T' + \delta \log n)$ rounds in the \GOSSIP model. Putting this together with the $T \alpha' + \beta'$ rounds it takes to compute the spanner $S$ we end up with $\delta \alpha T' + \delta \log n + T \alpha' + \beta'$ rounds in total.

\begin{sloppypar}In general (i.e., for $\beta > \alpha$) it is not
  possible (see, e.g., Corollary~\ref{cor:lowerbound}) to simulate the
  \LOCAL algorithm step by step. Instead we rely on the fact that any
  \LOCAL computation over $T$ rounds can be performed by each node
  first gathering information of all nodes in a $T$-neighborhood and
  then doing \LOCAL computations to determine the output. For this all
  nodes simply include all their initial knowledge (and for a
  randomized algorithm all the random bits they might use throughout
  the algorithm) in a message and flood this in $T$ rounds to all node
  in their $T$-neighborhood. Because a node now knows all information
  that can influence its output over a $T$-round computation it can
  now locally simulate the algorithm for itself and its neighbors to
  the extend that its output can be determined. Having this we
  simulate the transformed algorithm as before: We first precomute $S$
  in $T \alpha' + \beta'$ time and then simulate the $T'$ rounds of
  flooding in $G$ by performing $\alpha T' + \beta$ rounds of
  \LOCAL-flooding in $S$. Using the $(O(\delta),O(\delta \log
  n))$-simulator this takes $O(\delta (\alpha T' + \beta) + \delta
  \log n)$ rounds in the \GOSSIP model.
\end{sloppypar}
\end{proof}

\begin{corollary}\label{cor:allsimulators}
\begin{sloppypar}
There is a $(2^{\log^* n} \log n, \log^4 n)$-simulator, a $(\log n,2^{\log^* n} \log^4 n)$-simulator and a $(O(1),\polylog n)$-simulator.
\end{sloppypar}
\end{corollary}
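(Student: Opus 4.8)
The plan is to obtain all three simulators as direct instantiations of Theorem~\ref{thm:spannerimprovement}, each time feeding it a different known distributed spanner construction for the \LOCAL model together with one of the base simulators of Corollary~\ref{cor:simplesimulators}. Recall that Theorem~\ref{thm:spannerimprovement} converts an algorithm that builds an $(\alpha,\beta)$-stretch spanner of hereditary density $\delta$ in $O(T)$ \LOCAL rounds, combined with any $(\alpha',\beta')$-simulator, into an $(\alpha\delta,\ T\alpha'+\beta'+\delta\log n+\delta\beta)$-simulator. Since the multiplicative overhead $\alpha\delta$ of the product simulator does not depend on $(\alpha',\beta')$, I would always take the base simulator to be the $(\log^3 n,0)$-simulator of Theorem~\ref{thm:super-sketch} (so $\alpha'=\log^3 n$, $\beta'=0$), whereby the additive term simplifies to $T\log^3 n+\delta\log n+\delta\beta$. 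It then only remains to match each claimed simulator to a spanner with the right parameters.

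For the $(2^{\log^* n}\log n,\log^4 n)$-simulator I would use a deterministic \LOCAL construction of a multiplicative $O(\log n)$-stretch spanner of hereditary density $2^{\log^* n}$, i.e.\ with at most $O(2^{\log^* n}|S|)$ edges in every induced subgraph, running in $T=O(\log n)$ rounds, as provided by~\cite{DGPV2008,pettie2010distributed}. Then $\alpha\delta=O(2^{\log^* n}\log n)$ and the additive term is $O(\log n)\cdot\log^3 n+2^{\log^* n}\log n=O(\log^4 n)$. For the $(\log n,2^{\log^* n}\log^4 n)$-simulator I would instead use a \emph{sparser but more expensively computed} spanner: a classical $(2\log n-1)$-stretch spanner of hereditary density $O(1)$ (for instance a subgraph of girth $\Theta(\log n)$ with $O(n)$ edges, which is automatically hereditarily sparse), constructed deterministically in $T=2^{\log^* n}\log n$ rounds~\cite{pettie2010distributed,DMPRS2005}; here $\alpha\delta=O(\log n)$ and the additive term is $2^{\log^* n}\log n\cdot\log^3 n+O(\log n)=2^{\log^* n}\log^4 n$. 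Finally, for the $(O(1),\polylog n)$-simulator I would use a \emph{nearly-additive} spanner with multiplicative stretch $1+\eps=O(1)$, additive stretch $\beta=\polylog n$, and hereditary density $O(1)$, computable in $T=\polylog n$ \LOCAL rounds, as in Pettie~\cite{Pettie2009}; then $\alpha\delta=O(1)$ and the additive term is $\polylog n\cdot\log^3 n+O(\log n)+O(\polylog n)=\polylog n$.

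The difficulty here is bookkeeping rather than a genuine obstacle. The two points that must be checked are: (i) that the spanner constructions selected from~\cite{DGPV2008,pettie2010distributed,DMPRS2005,Pettie2009} have controlled \emph{hereditary} density, not merely a global edge bound — this is immediate for girth-based or greedy-type spanners and must be verified (or arranged by post-processing) for clustering-based ones; and (ii) that for the third, nearly-additive spanner we are in the regime $\beta>\alpha$, so Theorem~\ref{thm:spannerimprovement} is invoked through its general branch, where flooding on the spanner is replaced by having each node first gather its entire $(\alpha T'+\beta)$-neighborhood and then locally simulate — this is precisely why only an additive $\delta\beta$, rather than a multiplicative $\beta$, enters the final bound. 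Once these points are in place, substituting the three parameter settings above into the formula of Theorem~\ref{thm:spannerimprovement} yields the three claimed simulators.
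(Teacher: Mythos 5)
Your overall strategy --- instantiate Theorem~\ref{thm:spannerimprovement} with known \LOCAL spanner constructions --- is the right one, and your treatment of the third, nearly-additive simulator matches the paper. But there is a genuine gap in how you obtain the first two simulators: you fix the base simulator to be the $(\log^3 n,0)$-simulator throughout, observing that the multiplicative overhead $\alpha\delta$ does not depend on $(\alpha',\beta')$. The additive overhead, however, contains the term $T\alpha'$, and with $\alpha'=\log^3 n$ this term is too large for the spanner constructions the cited references actually provide. Concretely, \cite{pettie2010distributed} gives stretch $2^{\log^* n}\log n$ with $\delta=O(1)$ in $T=O(2^{\log^* n}\log n)$ rounds, so with your base simulator $T\alpha'=2^{\log^* n}\log^4 n$, not $\log^4 n$; and \cite{DMPRS2005} gives stretch $O(\log n)$ with $\delta=O(1)$ only in $T=O(\log^3 n)$ rounds, so $T\alpha'=\log^6 n$, not $2^{\log^* n}\log^4 n$. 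To make the arithmetic come out you instead posit hybrid constructions --- an $O(\log n)$-stretch spanner of hereditary density $2^{\log^* n}$ computable in $O(\log n)$ rounds, and an $O(\log n)$-stretch constant-density spanner computable in $2^{\log^* n}\log n$ rounds --- neither of which is supplied by \cite{DGPV2008,pettie2010distributed,DMPRS2005} (indeed, whether a $\log n$-stretch constant-density spanner can be built in $o(\log^3 n)$ \LOCAL rounds is left open in this paper).

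The missing idea is \emph{bootstrapping}: each simulator in the chain is used as the base $(\alpha',\beta')$-simulator to compute the \emph{next} spanner more cheaply. The paper first combines the $(\log^3 n,0)$-simulator with the \cite{DGPV2008} construction ($\alpha=\delta=T=O(\log n)$, $\beta=0$) to get an intermediate $(\log^2 n,\log^4 n)$-simulator; it then runs the \cite{pettie2010distributed} construction under \emph{that} simulator, so $T\alpha'=2^{\log^* n}\log n\cdot\log^2 n=o(\log^4 n)$, yielding the $(2^{\log^* n}\log n,\log^4 n)$-simulator; and it then runs \cite{DMPRS2005} under the latter, so $T\alpha'=\log^3 n\cdot 2^{\log^* n}\log n=2^{\log^* n}\log^4 n$, yielding the $(\log n,2^{\log^* n}\log^4 n)$-simulator. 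Your observations (i) on verifying hereditary (not just global) density and (ii) on invoking the general $\beta>\alpha$ branch for the Pettie spanner are both correct and appear in the paper, but without the bootstrap the first two claimed simulators do not follow from the cited constructions.
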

\begin{proof}
We are going to construct the simulators with increasingly better multiplicative overhead by applying Theorem~\ref{thm:spannerimprovement} to existing spanner constructions~\cite{DGPV2008,pettie2010distributed,DMPRS2005,Pettie2009} for the \LOCAL model.
We first construct a $(\log^2 n,\log^4 n)$-simulator by combining our new $(\log^3 n,0)$-simulator with the deterministic spanner construction in~\cite{DGPV2008}. The construction in~\cite{DGPV2008} takes $O(\log n)$ rounds in the \LOCAL model and adds at most one edge to each node per round. Using $\alpha = T = \delta=O(\log n)$, $\alpha'=\log^3 n$ and $\beta = \beta' = 0$ in Theorem~\ref{thm:spannerimprovement} gives the desired $(\log^2 n,\log^4 n)$-simulator.
Having this simulator, we can use~\cite{pettie2010distributed} to improve the multiplicative overhead while keeping the additive simulation overhead the same.
In~\cite{pettie2010distributed} an $\alpha = (2^{\log^* n} \log n)$-stretch spanner with constant hereditary density $\delta = O(1)$ is constructed in $T = O(2^{\log^* n} \log n)$-time in the \LOCAL model. Using these parameters and the $(\log^2 n,\log^4 n)$-simulator in Theorem~\ref{thm:spannerimprovement} leads to the strictly better $(2^{\log^* n} \log n, \log^4 n)$-simulator claimed here.
Having this simulator, we can use it with the randomized spanner construction in~\cite{DMPRS2005}. There, an $\alpha$-stretch spanner, with $\alpha = O(\log n)$, is constructed in $T = O(\log^3 n)$-time in the \LOCAL model by extracting a subgraph with $\Omega(\log n)$ girth. Such a graph has constant hereditary density $\delta = O(1)$, as argued in~\cite{peleg1989spanners}. Using these parameters and the $(2^{\log^* n} \log n,\log^4 n)$-simulator in Theorem~\ref{thm:spannerimprovement} leads to the $(\log n,2^{\log^* n} \log^4 n)$-simulator.
Finally, we can use any of these simulators together with the
nearly-additive $(5+\eps,\polylog n)$-spanner construction
from~\cite{Pettie2009} to obtain our last simulator. It is easy to
verify that the randomized construction named $AD^{\log \log n}$ in~\cite{Pettie2009} can be computed in a distributed fashion in the \LOCAL model in $\polylog n$ time and has hereditary density $\delta = O(1)$. This together with any of the previous simulators and Theorem~\ref{thm:spannerimprovement} results in a $(O(1),\polylog n)$-simulator.
\end{proof}

With these various simulators it is possible to simulate a computation
in the \LOCAL model with very little (polylogarithmic) multiplicative
or additive overhead in the \GOSSIP model. Note that while the
complexity of the presented simulators is incomparable, one can
interleave their executions (or the executions of the simulated
algorithms) and thus get the best runtime for any instance. This,
together with the results from Corollary~\ref{cor:allsimulators}
and~\ref{cor:simplesimulators}, proves our main result of Theorem~\ref{thm:allsimulators}.

\section{Discussion}
\label{sec:discussion}  
This paper presents a more efficient alternative to the {\rfont UniformGossip} algorithm that allows fast rumor spreading on all graphs, with no dependence on their conductance. We then show how this leads to fast simulation in the \GOSSIP model of any algorithm designed for the \LOCAL model by  constructing sparse spanners. This work leaves some interesting directions for future work, which we discuss below.

First, as mentioned in the introduction, there are cases in which the algorithm where each node chooses a neighbor uniformly at random only from among those it has not yet heard from (directly or indirectly), performs slower than the optimal. An example is the graph in Figure~\ref{fig:example}, where $C_i$ stands for a clique of size $O(1)$ in which every node is also connected to the node $x_i$.
\begin{figure}
\label{fig:example}
\begin{center}
\includegraphics[trim = 5cm 9cm 5cm 5cm, clip, scale=0.6]{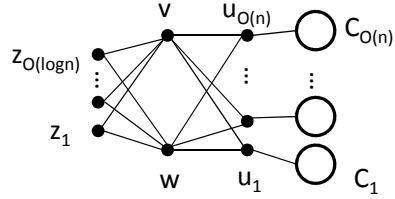}
\end{center}
\caption{An example illustrating the behavior of the algorithm choosing a random neighbor whose information is still unknown.}
\end{figure}
In this example, it takes $2$ rounds for the node $w$ to hear about the node $v$ (through nodes in $\{z_1,\dots,z_{O(\log{n})}\}$). During these rounds there is a high probability that a constant fraction of the nodes in $\{u_1,\dots,u_{O(n)}\}$ did not yet hear from neither $v$ nor $w$. With high probability, a constant fraction of these will contact $w$ before contacting $v$, after which they will not contact $v$ anymore because they will have heard from it through $w$. This leaves $O(n)$ nodes which $v$ has to contact directly (since nodes in $\{z_1,\dots,z_{O(\log{n})}\}$ are no longer active since they already heard from both of their neighbors), resulting in a linear number of rounds for \NeighborExchange.

We note, however, that this specific example can be solved by requiring nodes that have heard from all their neighbors to continue the algorithm after resetting their state, in the sense that they now consider all their neighbors to be such that they have not heard from (this is only for the sake of choosing the next neighbor to contact, the messages they send can include previous information they received). Therefore, we do not rule out the possibility that this algorithm works well, but our example suggests that this may not be trivial to prove.

Second, regarding our solution to the \Rumor problem, the {\rfont Superstep} algorithm, as presented, can be implemented in synchronous environments in a straightforward manner. To convert our algorithm to the asynchronous setting, one needs
to synchronize the reversal step. Synchronization is a heavy-handed approach and not desirable
in general.

To alleviate this problem, we believe, it is possible to get rid of the reversal step altogether.
The basic idea is to do away with the hard decisions to ``remove'' edges once a message from
a neighbor has been received. And instead to multiplicatively decrease the weight of such edges
for the next round. This approach would introduce a slight asymmetry in each edge's weight in both directions.
In order to analyze such an algorithm, it is needed to understand the behavior of
{\rfont RandomNeighbor} in the general asymmetric setting. In this setting, each vertex
uses its own distribution over outgoing links when choosing a communication partner at each step. We believe that understanding the asymmetric {\rfont RandomNeighbor} is an open problem of central importance.

\bibliographystyle{abbrv}
\bibliography{GossipConstraint}

\appendix

\section{Proof of Lemma~\ref{lemma:balcut}}
\label{app:balcut}

\BPF
Assume, towards a contradiction, that $\Phi(V- S) < \xi/3$. Then, there exists
a cut $(P,Q)$ of $V- S$ with $\varphi(P,Q)<\xi/3$ and specifically
\beq
\max\Bigg\{\frac{w(P,Q)}{\vol(P)},\frac{w(P,Q)}{\vol(Q)} \Bigg\}\le \frac{\xi}{3}
\eeq
Henceforth, let $Q$ be the smaller of the two, i.e. $\vol(Q)\le\vol(V- S)/2$.

We are going to show that $\varphi(S\cup Q,P) \le \xi$
and either $S\cup Q$ or $P$ should have been chosen instead of $S$.

Consider the case $\vol(S\cup Q) \le \vol(V)/2$. In this case,
\begin{gather*}
\varphi(S\cup Q,P)
	= \frac{w(S,P)+w(Q,P)}{\vol(S\cup Q)}
	= \frac{w(S,P)+w(Q,P)}{\vol(S) + \vol(Q)} \\
	\le \max\Bigg\{\frac{w(S,P)}{\vol(S)},\frac{w(Q,P)}{\vol(Q)}\Bigg\}
	\le \max\Bigg\{\frac{w(S,P)+w(S,Q)}{\vol(S)},\frac{w(Q,P)}{\vol(Q)}\Bigg\}
	\le \max\Big\{\xi, \xi/3\Big\} = \xi
\end{gather*}
This establishes a contradiction, because $\varphi(S\cup Q,P)\le\xi$
and $\vol(S)<\vol(S\cup Q)\le\vol(V)/2$.

Now let's consider the case $\vol(S\cup Q) > \vol(V)/2$.
First, we argue that $\vol(S\cup Q)$ cannot be too large.
We use that $\vol(Q)\le\frac{1}{2}\vol(V- S) = \frac{1}{2}(\vol(V)-\vol(S))$.
\beq
\vol(S\cup Q)
	= \vol(S)+\vol(Q)
	\le \vol(S)+\frac{\vol(V)-\vol(S)}{2}
	= \frac{\vol(V)+\vol(S)}{2}
	\le \frac{5}{8}\vol(V)
\eeq
Hence, $\vol(P) \ge \frac{3}{8}\vol(V)$. In addition, for the cut size, we have
\begin{align*}
w(S\cup Q, P)
	&= w(S,P)+w(Q,P) \\
	&\le \xi\vol(S) + \frac{\xi}{3}\vol(Q) \\
	&\le \xi\vol(S) + \frac{\xi}{3}\frac{\vol(V)-\vol(S)}{2} \\
	&\le \frac{5}{6}\xi\vol(S)+\frac{1}{6}\xi\vol(V) \\
	&=\frac{3}{8}\xi\vol(V)
\end{align*}
And now we can bound the cut conductance:
\beq
\varphi(S\cup Q,P)
	= \frac{w(S\cup Q,P)}{\vol(P)}
	\le \frac{\frac{3}{8}\xi\vol(V)}{\frac{3}{8}\vol(V)} = \xi
\eeq
This also establishes a contradiction because $\varphi(S\cup Q,P)\le\xi$ while
$\vol(S)\le\frac{1}{4}\vol(V)<\frac{3}{8}\vol(V)\le\vol(P)\le\frac{1}{2}\vol(V)$.
\EPF

\section{Proof of Theorem~\ref{thm:clstr}}
\label{app:clstr}

\BPF
The depth $K$ of the recursion is, by construction, at most $\log_{4/3}\vol(V)$ assuming
that the smallest non-zero weight is 1.
Let $\mathcal{R}_i\subseteq 2^V$ be a collection of the $U$-parameters of invocations
of \texttt{Cluster} at depth $0\le i\le K$ of the recursion.
(So, for example, $\mathcal{R}_0=\{V\}$.) For a set $U$ let $S(U)$ be the
small side of the cut produced by \texttt{Cluster($G$,$U$,$\xi$)}, or $\emptyset$ if no
eligible cut was found. We can then bound the total weight of cut edges as
\begin{align*}
\sum_{0\le i \le K} \sum_{U\in\mathcal{R}_i} w\big(S(U),U- S(U)\big)
	\le \sum_{0\le i \le K} \sum_{U\in\mathcal{R}_i} \xi\vol\big(S(U)\big)
	\le \sum_{0\le i \le K} \sum_{U\in\mathcal{R}_i} \frac{\xi}{2}\vol(U) \\
	\le \frac{\xi}{2}\sum_{0\le i \le K} \sum_{U\in\mathcal{R}_i} \vol(U)
	\le \frac{\xi}{2}\sum_{0\le i \le K} \vol(V)
	\le \frac{\xi \log_{4/3}\vol(V)}{2}\vol(V),
\end{align*}
Where we use the convention $w(\emptyset,S)=0$.
If we set $\xi = \frac{3\zeta}{\log_{4/3}\vol(V)}$, for some $0<\zeta<1$,
then Corollary~\ref{cor:balcut} establishes the theorem.
\EPF

\end{document}